\title{\textbf{Entanglement Measure Based on Optimal Entanglement Witness}}
\author{Nan Yang$^1$,\quad Jiaji Wu$^1$,\quad Xianyun Dong$^1$,\quad Longyu Xiao$^1$,\quad Jing Wang$^1$\thanks{wangjing3310@163.com},\quad Ming Li$^1$}
\affil{$^1$College of Science, China University of Petroleum, 266580 Qingdao, P.R. China}
\theoremstyle{definition}
\newtheorem{Def}{Definition}
\newtheorem{Cor}{Corollary}
\newtheorem{Pro}{Propositon}
\newtheorem{mythm}{Theorem}
\begin{document}
	\maketitle
	\begin{abstract}
		We introduce a new entanglement measure based on optimal entanglement witness. First of all, we show that the entanglement measure satisfies some necessary properties, including zero entanglements for all separable states, convexity, continuity, invariance under local unitary operations and non-increase under local operations and classical communication(LOCC). More than that, we give a specific mathematical expression for the lower bound of this entanglement measure for any bipartite mixed states. We further improve the lower bound for 2$ \otimes $2 systems. Finally, we numerically simulate the lower bound of several types of specific quantum states.
	\end{abstract}
	
	\section{Introduction}
	
		Entanglement is one of the most remarkable features of quantum mechanics, which has been recognized as an essential resource in quantum information theory \cite {F. L. Yan2011,T. Gao2008}. In recent years, quantum entanglement is increasingly the focus of people's attention and is widely applied to quantum information processing tasks \cite {M.A. Nilsen2000}.
	
		Since entanglement is a newly discovered state of resources, it's of vital importance to discover the mathematical structure behind its theory. Entanglement measure is the quantization of entanglement and various entanglement measures have been well defined. Arindam Lala \cite {Arindam Lala2020} has studied various entanglement measures associated with certain non-conformal field theories. And Jacob L. Beckey et.al. \cite {Jacob L. Beckey2021} have proposed  entanglement measures that are computable and operationally meaningful multipartite.
	
		Entanglement witness(EW) gives a sufficient condition for detecting entanglement, which is equivalent to making a hyperplane in the quantum state space to separate some separable states from entangled states. In Ref. \cite {Cheng-Jie Zhang2007}, the authors have studied the optimal EW(OEW) based on local orthogonal observables. And the optimal entanglement witness constructed by Ref. \cite {Minh Tam2021} could effectively detect the entangled states produced by cooper pair splitters.
	
		In this paper, we introduce a new entanglement measure of bipartite quantum states, which is closely related to optimal entanglement witness. In Sec.III, we prove explicitly that this new measure satisfies many properties of bipartite entanglement measures. In Sec.IV, we obtain the lower bound of this measure for any bipartite mixed states. Furthermore, considering the property that there is a double cover relationship between SU(2) group and SO(3) group, we give a better lower bound on $ 2 \otimes 2 $ space. Finally, we present numerical simulations of the lower bound for some specific forms of quantum states.
	
	\section{Construction of Entanglement Measure}
	
		In Ref. \cite {Jinchuan Hou2010}, we know that EW could be the operator in the form of $ \alpha $$ I+L $. Let $ \mathcal {H}_{A} $ and  $ \mathcal {H}_{B} $ be arbitrary finite-dimensional Hilbert spaces. If an EW $ W_{E} $ acting on $\mathcal {H}_{A} $ and $ \mathcal {H}_{B} $ can be written in the following form:
	\begin{eqnarray}
		&W_{E}=\alpha (L) I-L,\label {eqa1}
	\end{eqnarray}
		where $ I $ is the identity operator, $ L $ is a self-adjoint operator, and $ \alpha(L) $ is the maximum of expectation values taken over all separable pure states $ | \varphi \rangle $:
	\begin{eqnarray}
		& \alpha = \max \limits_{| \varphi \rangle \in sep} \langle \varphi |L| \varphi \rangle,
	\end{eqnarray}
		then this EW is called an OEW. We use the results of this OEW to quantify entanglement.
	\begin{Def}
		Let $ M_{a} $ be a positive semi-definite set
		\begin{equation}
			M_{a}= \{ H \geq 0\ |\ H \in \mathcal {H}_{A} \otimes \mathcal{H}_{B}, tr(H)=a, a \in R^{+} \}. \label {e1}
		\end{equation}
	
		The entanglement measure $ W_{E} $ of a bipartite state $ \rho \in ( \mathcal{H}_{A} \otimes \mathcal {H}_{B}) $ is given by
		\begin{eqnarray} \label {md}
			&C_{w}( \rho ) = \max \limits _{L \in M_{a}} \{-tr(W_{E} \rho) \}.
		\end{eqnarray}
	\end{Def}
	
		We define $ ||M||_{KF}=Tr[ \sqrt {MM^{ \dagger }}] $, $ ||M||_{HS}= \sqrt {Tr[MM^{ \dagger }]} $ and $ ||M||_{2}= \sigma _{max}(M) $. They respectively represent the trace norm, Frobenius norm and 2-norm of the matrix $ M $. Meanwhile, we could also get the following result according to the relevant knowledge of EW.
	\begin{eqnarray}
		tr(W_{E} \rho ) < 0, \\
		tr(W_{E} \sigma ) \geq 0,\label {eqa6}
	\end{eqnarray}
		where $ \rho$ is an arbitrary entangled state acting on $ \mathcal{H}_{A} \otimes \mathcal {H}_{B} $ and $ \sigma $ is an arbitrary separable state acting on $ \mathcal{H}_{A} \otimes \mathcal {H}_{B} $. It's not difficult to see that $ W_{E} $ is one type of OEW, and we could get the above inequalities from the Ref. \cite {J. Sperling2009}.
	
	\section{Properties of Entanglement Measure}
	
		For $ \forall \rho \in \mathcal {H}_{A} \otimes \mathcal {H}_{B} $, an entanglement measure is a functional $ C $ defined on the set of density operators on the Hilbert space. A well-defined measure of entanglement $ C $ should satisfy the following requirements:
	
	\noindent (E1) $ C( \rho )=0 $, for any $ \rho \in Sep$(zero for all separable states).
	
	\noindent (E2) $ C( \lambda \rho _{1}+(1- \lambda ) \rho _{2}) $$ \leq \lambda C( \rho _{1})+(1- \lambda ) C( \rho _{2})$, for all $ \rho_{1} $, $ \rho_{2} $ and $ 0 \leq \lambda \leq 1 $ (convexity).
	
	\noindent (E3)  Keeping invariant under local unitary transformations, i.e.,
	
	\noindent $$ C((U_{1} \otimes U_{2})^{ \dagger } \rho(U_{1} \otimes U_{2}))= C( \rho) $$ holds for all unitary operators $U_{1}$ and $U_{2}$.
	
	\noindent (E4) $ C( \Lambda_{LOCC}(\rho))$$ \leq C( \rho) $(nonincreasing under local operations and classical communication (LOCC)).
	
	\noindent (E5) When $ \| \rho - \sigma \| \rightarrow 0$, $C( \rho)-C( \sigma) \rightarrow 0$(continuity).
	
		Subsequently, we will prove that our entanglement measure defined in Eq.(\ref{md}) fulfills these requirements.
	
	\begin{Pro}
		$C_{w}( \rho)$ satisfies property (E2).
	\end{Pro}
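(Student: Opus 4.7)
The plan is to exploit the fact that, for each fixed $L\in M_{a}$, the quantity $-\mathrm{tr}(W_{E}\rho)$ is an affine function of $\rho$, and that a pointwise maximum of affine functions is automatically convex. Concretely, I would first rewrite the objective as
\begin{equation*}
-\mathrm{tr}(W_{E}\rho)=-\alpha(L)\,\mathrm{tr}(\rho)+\mathrm{tr}(L\rho)=\mathrm{tr}(L\rho)-\alpha(L),
\end{equation*}
using that $\rho$ is a density operator with $\mathrm{tr}(\rho)=1$ and that $\alpha(L)$, being the maximum expectation of $L$ over separable pure states, depends only on $L$ and not on $\rho$. This already shows that the inner functional is affine in the state argument for each fixed $L$.

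Next, I would plug in $\rho=\lambda\rho_{1}+(1-\lambda)\rho_{2}$ and apply linearity of the trace to get the pointwise identity, valid for every $L\in M_{a}$,
\begin{equation*}
-\mathrm{tr}\!\left(W_{E}(\lambda\rho_{1}+(1-\lambda)\rho_{2})\right)=\lambda\bigl[-\mathrm{tr}(W_{E}\rho_{1})\bigr]+(1-\lambda)\bigl[-\mathrm{tr}(W_{E}\rho_{2})\bigr].
\end{equation*}
Taking $\max_{L\in M_{a}}$ on both sides and using the standard inequality $\max_{L}(f(L)+g(L))\leq \max_{L}f(L)+\max_{L}g(L)$ (applied to the two nonnegatively weighted terms separately) yields
\begin{equation*}
C_{w}(\lambda\rho_{1}+(1-\lambda)\rho_{2})\leq \lambda\,C_{w}(\rho_{1})+(1-\lambda)\,C_{w}(\rho_{2}),
\end{equation*}
which is exactly property (E2).

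I do not expect a genuine obstacle here; the argument is the textbook fact that a supremum of affine functionals is convex. The only point that deserves an explicit remark is that the same $L$ must be used on both sides of the affine identity, so that separating into two independent optimisations over $L$ only produces the desired inequality (not an equality). I would also briefly note that the set $M_{a}$ does not depend on $\rho$, ensuring that the two maxima on the right-hand side are taken over the same feasible set and are therefore well-defined and equal to $C_{w}(\rho_{1})$ and $C_{w}(\rho_{2})$ respectively.
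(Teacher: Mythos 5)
Your proposal is correct and is essentially the same argument as the paper's: both write $-\mathrm{tr}(W_{E}\rho)=\mathrm{tr}(L\rho)-\alpha(L)$, observe it is affine in $\rho$ for fixed $L$, split the convex combination into $\lambda$- and $(1-\lambda)$-weighted terms, and bound the maximum of the sum by the sum of the maxima. Your explicit remarks about the fixed feasible set $M_{a}$ and the direction of the inequality are sound but add nothing beyond what the paper's chain of equalities already encodes.
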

	
	\begin{proof}[ \textbf{Proof}]
		Let $ \rho = \lambda \rho_{1}+(1- \lambda ) \rho_{2} $, where $0 \leq \lambda \leq 1$. By substituting it into $C_{w}( \rho)$ we have
		\begin{eqnarray}
			C_{w}( \rho )&=&C_{w}( \lambda \rho_{1}+(1-\lambda)\rho_{2})\nonumber\\
			&=&\max\limits_L\{-\alpha(L)+\lambda tr(L\rho_{1})+(1-\lambda) tr(L\rho_{2})\}\nonumber\\
			&=&\max\limits_L\{\lambda (-\alpha(L)+tr(L\rho_{1}))+(1-\lambda) (-\alpha(L)+tr(L\rho_{2}))\}\nonumber\\
			&\leq& \lambda C_{w}(\rho_{1})+(1-\lambda) C_{w}(\rho_{2}).\label{4}
		\end{eqnarray}
	\end{proof}
	
	\begin{Pro}
		$C_{w}(\rho)$ satisfies property (E1).
	\end{Pro}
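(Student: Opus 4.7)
The plan is to establish both inequalities $C_w(\sigma) \le 0$ and $C_w(\sigma) \ge 0$ for any separable state $\sigma$.

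First I would unfold the definition: for every $L \in M_a$, one has $-\mathrm{tr}(W_E \sigma) = -\alpha(L) + \mathrm{tr}(L\sigma)$, so proving $C_w(\sigma) \le 0$ amounts to showing $\mathrm{tr}(L\sigma) \le \alpha(L)$ for all admissible $L$ and all separable $\sigma$. This follows directly from the separable decomposition $\sigma = \sum_i p_i |\varphi_i\rangle\langle\varphi_i|$ with each $|\varphi_i\rangle$ a product state, since then
\begin{equation*}
\mathrm{tr}(L\sigma) = \sum_i p_i \langle \varphi_i | L | \varphi_i \rangle \le \max_i \langle \varphi_i | L | \varphi_i \rangle \le \alpha(L).
\end{equation*}
Equivalently, one can quote the inequality $\mathrm{tr}(W_E\sigma) \ge 0$ from Eq.\,(\ref{eqa6}), which is precisely the defining property of an entanglement witness.

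Next I would establish $C_w(\sigma) \ge 0$ by exhibiting a single $L \in M_a$ that saturates the bound. The natural choice is the rescaled identity $L_0 = \tfrac{a}{d}\,I$, where $d = \dim(\mathcal{H}_A\otimes\mathcal{H}_B)$. This $L_0$ is positive semi-definite with $\mathrm{tr}(L_0) = a$, so $L_0 \in M_a$. Moreover $\langle \varphi | L_0 | \varphi\rangle = a/d$ for every normalized pure state, hence $\alpha(L_0) = a/d$ and the corresponding witness is $W_E = \alpha(L_0) I - L_0 = 0$. Then $-\mathrm{tr}(W_E \sigma) = 0$, so the maximum in Eq.\,(\ref{md}) is at least $0$.

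Combining the two steps yields $C_w(\sigma) = 0$ for every separable $\sigma$, which is exactly property (E1). I do not expect a real obstacle here: the upper bound is essentially the witness inequality built into the construction of $W_E$, and the lower bound is witnessed by the trivial choice $L_0 \propto I$. The only subtlety to flag is that the lower bound requires $M_a$ to contain a multiple of the identity, which it does for any $a>0$, so the argument goes through for every admissible trace parameter $a$.
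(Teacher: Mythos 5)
Your proof is correct and follows essentially the same route as the paper: the upper bound $C_w(\sigma)\le 0$ comes from the witness inequality $\mathrm{tr}(L\sigma)\le\alpha(L)$ for separable $\sigma$, and the lower bound from the single choice $L_0=\tfrac{a}{d}I$, which forces $\alpha(L_0)=a/d$ and $W_E=0$. The only (harmless) difference is that you treat separable mixed states directly via the convex decomposition $\sum_i p_i\langle\varphi_i|L|\varphi_i\rangle\le\alpha(L)$, whereas the paper first handles product pure states and then appeals to convexity (E2) to pass to mixtures.
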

	
	\begin{proof}[\textbf{Proof}]
		Assume that $ \rho $ is a separable and pure state acting on the space of $ \mathcal{H}_{A}\otimes\mathcal{H}_{B} $, i.e. $\rho $ could be written as $ \rho = \rho_{1} \otimes \rho_{2}= | \varphi_{1} \rangle \langle \varphi_{1}| \otimes | \varphi_{2}\rangle \langle \varphi_{2}| $, where $ | \varphi_{1} \rangle \in \mathcal{H}_{A}$, $ | \varphi_{2} \rangle \in \mathcal{H}_{B} $. Then we have
		\begin{eqnarray}
			tr( W_{E} \rho)&=&tr( \alpha \rho -L \rho) \nonumber\\
			&=&tr( \alpha)-tr( L \rho _{1} \otimes \rho _{2}) \nonumber\\
			&=&tr(\max \langle \varphi^{A}| \otimes \langle \varphi^{B}| L | \varphi^{A} \rangle \otimes | \varphi^{B} \rangle )-tr( L\rho_{1} \otimes \rho_{2}) \nonumber\\
			&=& \max \limits_{ \rho^{A*},\rho^{B*}}tr(L \rho^{A*} \otimes \rho^{B*})-tr(L \rho_{1} \otimes \rho_{2}),
		\end{eqnarray}
		where $ \rho^{A*}= | \varphi^{A*} \rangle \langle \varphi^{A*}| $ and $ \rho^{B*}= | \varphi^{B*} \rangle \langle \varphi^{B*}| $. According to the definition and Eq.(\ref{eqa6}) we have $ -tr ( W_{E} \rho ) \leq 0 $. Let $ L= \frac{a}{n} I $, where $ n $ is the dimension of $ L $, we have $ W_{E}=0 $. Thus the entanglement degree for the separable pure states is 0, which illustrate that
		\begin{eqnarray}
			C_{w}( \rho ) = \max \limits_L\{ -tr(W_{E} \rho) \}=0.
		\end{eqnarray}
		
		The above-mentioned fact is discussed when $ \rho $ is a separable pure state acting on the space of $ \mathcal{H}_{A}\otimes\mathcal{H}_{B} $. If $ \rho $ is a separable mixed state, we have a decomposition of $ \rho $:
		\begin{eqnarray}
			\rho = \sum_{i}^{}p_{i}  | \varphi _{1i}  \rangle \langle \varphi _{1i}| \otimes| \varphi _{2i}  \rangle \langle \varphi _{2i}| 	.
		\end{eqnarray}
		
		According to property (E2), we know $ C_{w}(\rho) $ satisfies
		\begin{eqnarray}
			C_{w}  ( \rho   ) \leq \sum_{i}^{}p_{i}C (| \varphi _{1i}\rangle \langle \varphi _{i}| \otimes | \varphi _{2i}  \rangle \langle \varphi _{2i}| )=0.
		\end{eqnarray}
		
		Note that when $ L= \frac{a}{n}I $, we have $ C_{w}( \rho )=0 $. Thus for $ \forall \rho $ that is separable,  $C_{w}( \rho)=0 $.
	\end{proof}
	
	\begin{Pro}
		$ C_{w}( \rho) $ has local unitary transformation invariance, satisfying property(E3).
	\end{Pro}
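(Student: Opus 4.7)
The plan is to reduce the invariance claim to two basic observations: the cyclicity of the trace, and the fact that local unitaries preserve the set of separable pure states. First, I would denote $U = U_{1} \otimes U_{2}$ and write out
\begin{eqnarray*}
C_{w}(U^{\dagger}\rho U) &=& \max_{L \in M_{a}} \{-\alpha(L) + tr(L\, U^{\dagger}\rho U)\} \\
&=& \max_{L \in M_{a}} \{-\alpha(L) + tr(ULU^{\dagger} \rho)\},
\end{eqnarray*}
using the cyclicity of the trace in the second line.

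Next, I would introduce the substitution $L' = U L U^{\dagger}$. The map $L \mapsto L'$ is a bijection of $M_{a}$ onto itself, because conjugation by the unitary $U$ preserves positive semi-definiteness and the value of the trace, so $L \in M_{a}$ iff $L' \in M_{a}$. This lets me rewrite the maximum over $L$ as a maximum over $L'$, provided I can also re-express $\alpha(L)$ in terms of $L'$.

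The key step is therefore showing $\alpha(L) = \alpha(L')$. For any product state $|\varphi\rangle = |\varphi_{A}\rangle \otimes |\varphi_{B}\rangle$, one has $U^{\dagger}|\varphi\rangle = U_{1}^{\dagger}|\varphi_{A}\rangle \otimes U_{2}^{\dagger}|\varphi_{B}\rangle$, which is again a product state; conversely every product state arises this way. Hence
\begin{eqnarray*}
\alpha(L') &=& \max_{|\varphi\rangle \in sep} \langle \varphi| U L U^{\dagger} |\varphi\rangle \;=\; \max_{|\tilde{\varphi}\rangle \in sep} \langle \tilde{\varphi}| L |\tilde{\varphi}\rangle \;=\; \alpha(L),
\end{eqnarray*}
where $|\tilde{\varphi}\rangle = U^{\dagger}|\varphi\rangle$ ranges over all separable pure states as $|\varphi\rangle$ does.

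Combining these facts gives $C_{w}(U^{\dagger}\rho U) = \max_{L' \in M_{a}} \{-\alpha(L') + tr(L' \rho)\} = C_{w}(\rho)$, which is exactly (E3). I do not anticipate a genuine obstacle here: the only nontrivial point is the invariance of the separable set under local unitaries, which is immediate from the tensor-product structure of $U$. The rest is a change of variables in the optimization.
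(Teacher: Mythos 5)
Your proof is correct and follows essentially the same route as the paper's: cyclicity of the trace, the observation that conjugation by $U_{1}\otimes U_{2}$ is a bijection of $M_{a}$ onto itself preserving $\alpha$, and a change of variables in the maximization. No gaps.
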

	
	\begin{proof}[\textbf{Proof}]
		For $ U^{\dagger }\rho U $, we write the measure as:
		\begin{eqnarray}
			C_{w}(U^{ \dagger } \rho U)&=& \max \limits_{L}\{-tr(W_{E} U^{ \dagger } \rho U) \} \nonumber \\
			&=& \max \limits_{L} \{- \alpha(L)+tr(L U^{ \dagger } \rho U) \} \nonumber\\
			&=& \max \limits_{L} \{- \alpha(L)+tr(U L U^{ \dagger } \rho)\}.
		\end{eqnarray}
		
		Since $ L $ is positive semi-definite, we have $  \langle \varphi |L | \varphi \rangle \geq 0 $ and $ \langle \varphi |{L}'| \varphi \rangle= (\langle \varphi | U) L (U^{ \dagger }| \varphi \rangle) \geq0 $ for any $ | \varphi \rangle $, so $ {L}' $ is also positive semi-definite. Furthermore, the unitary transformation does not change the eigenvalues, so $ tr({L}')=tr(L)=a $. As $ U $ is a local unitary matrix, $ | \varphi^{*} \rangle= U^{\dagger } | \varphi \rangle$ is still a separable pure state. In addition, when $ | \varphi \rangle $ takes all separable states, $ | \varphi^{*} \rangle $ could also take all the separable states owing to the fact that $ U^{ \dagger} $ is reversible, then
		\begin{eqnarray}
			\alpha(U L U^{ \dagger})&=& \max \limits_{ | \varphi \rangle \in sep} \langle \varphi | U L U^{\dagger }| \varphi \rangle \nonumber\\
			&=& \max \limits_{ | \varphi \rangle \in sep} ( \langle  \varphi | U) L (U^{ \dagger } | \varphi \rangle) \nonumber\\
			&=& \max \limits_{ | \varphi \rangle^{*} \in sep} \langle \varphi^{*}| L | \varphi^{*} \rangle \nonumber\\
			&=& \max \limits_{ | \varphi \rangle \in sep} \langle \varphi | L | \varphi \rangle \nonumber\\
			&=& \alpha(L).
		\end{eqnarray}
		
		When $ L $ takes all Hermitian matrices, $ ULU^{ \dagger}$ can also take all Hermitian matrices. For $ \forall L' \in \mathcal{H}_{A}\otimes\mathcal{H}_{B} $, we have $ L=U^{ \dagger } L' U \in \mathcal{H}_{A} \otimes \mathcal{H}_{B} $ which satisfies $L'=U L U^{ \dagger }$. Therefore
		\begin{eqnarray}
			C_{w}(U^{ \dagger } \rho U)&=& \max \limits_{L}\{- \alpha(L)+tr(ULU^{ \dagger } \rho) \} \nonumber\\
			&=& \max \limits_{U^{ \dagger } L U} \{- \alpha(U L U^{ \dagger })+tr(U L U^{ \dagger } \rho) \} \nonumber\\
			&=& \max \limits_{L'} \{- \alpha(L')+tr( L' \rho) \} \nonumber\\
			&=&C_{w}( \rho ).
		\end{eqnarray}
	\end{proof}
	
	\begin{Pro}
		$ C_{w}(\rho ) $ is entangled monotone with respect to separable operation $ \varepsilon (\rho ) $, where
		\begin{eqnarray}
			\varepsilon ( \rho )= \sum_{i}^{}(E_{i} \otimes F_{i}) \rho (E_{i} \dagger \otimes F_{i} \dagger ),
		\end{eqnarray}
		and operator completeness is satisfied, which means $ \sum_{i}^{}E_{i}E_{i} \dagger \otimes F_{i}F_{i}\dagger =I_{\mathcal{H}_{A} \otimes \mathcal{H}_{B} } $.
	\end{Pro}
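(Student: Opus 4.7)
The plan is to show that for any $L \in M_a$ arising in the optimization defining $C_w(\varepsilon(\rho))$, one can produce a feasible operator $L' \in M_a$ for the optimization defining $C_w(\rho)$ whose objective value is at least as large. The natural candidate is the Heisenberg-dual image
$$L' := \sum_i (E_i^\dagger \otimes F_i^\dagger)\, L\, (E_i \otimes F_i).$$
Trace cyclicity immediately yields $tr(L\,\varepsilon(\rho)) = tr(L'\rho)$, so the linear part of the objective is preserved by the replacement $L \mapsto L'$.

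The next step is to check that $L' \in M_a$ and that $\alpha(L') \leq \alpha(L)$. Positive semi-definiteness of $L'$ holds termwise, each summand being a congruence of the PSD operator $L$. For the trace, cyclicity combined with the stated completeness relation yields $tr(L') = tr\bigl(L \cdot \sum_i (E_i E_i^\dagger \otimes F_i F_i^\dagger)\bigr) = tr(L) = a$. For the inequality on $\alpha$, fix a product pure state $|\varphi\rangle = |\varphi^A\rangle \otimes |\varphi^B\rangle$ and set $|\psi_i\rangle := (E_i \otimes F_i)|\varphi\rangle$; these are again product vectors, although generally unnormalized. Writing $|\psi_i\rangle = \sqrt{q_i}\,|\tilde\psi_i\rangle$ with $|\tilde\psi_i\rangle$ normalized and separable gives
$$\langle \varphi|L'|\varphi\rangle = \sum_i q_i\, \langle\tilde\psi_i|L|\tilde\psi_i\rangle \leq \alpha(L) \sum_i q_i = \alpha(L),$$
because $\sum_i q_i = \langle \varphi|\sum_i(E_i^\dagger E_i \otimes F_i^\dagger F_i)|\varphi\rangle = 1$ by the trace-preserving form of completeness. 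Maximizing over separable $|\varphi\rangle$ then yields $\alpha(L') \leq \alpha(L)$.

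Assembling these pieces,
$$-\alpha(L) + tr(L\,\varepsilon(\rho)) \;\leq\; -\alpha(L') + tr(L'\rho) \;\leq\; C_w(\rho),$$
and taking the maximum over $L \in M_a$ on the left establishes $C_w(\varepsilon(\rho)) \leq C_w(\rho)$, which together with the convexity already proved in Proposition~1 delivers the full non-increase statement under LOCC-type operations (property (E4)). The main obstacle I anticipate is the central inequality $\alpha(L') \leq \alpha(L)$: its proof rests on the fact that the Kraus operators have the product form $E_i \otimes F_i$, so that $(E_i \otimes F_i)|\varphi^A\rangle|\varphi^B\rangle$ remains a product vector — precisely why \emph{separable} (rather than general) operations are required, and where a generic CPTP map would break the argument. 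A secondary subtlety is the interplay between the two completeness relations — unitality $\sum_i E_i E_i^\dagger \otimes F_i F_i^\dagger = I$ secures the trace of $L'$, while trace preservation $\sum_i E_i^\dagger E_i \otimes F_i^\dagger F_i = I$ is what makes $\sum_i q_i = 1$ — but both are standard features of a separable CPTP unital channel and can be invoked in tandem.
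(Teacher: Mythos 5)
Your proof follows essentially the same route as the paper: pull the Kraus operators onto $L$ via trace cyclicity, verify that $L' = \sum_i (E_i^\dagger\otimes F_i^\dagger)L(E_i\otimes F_i)$ remains in $M_a$, and bound $\alpha(L')\le\alpha(L)$ by exploiting that $(E_i\otimes F_i)|\varphi\rangle$ stays a product vector. You are in fact slightly more careful than the paper at two points: you bound each $\langle\tilde\psi_i|L|\tilde\psi_i\rangle$ by $\alpha(L)$ directly instead of illegitimately exchanging a maximum with a sum, and you correctly observe that $\sum_i q_i=1$ rests on the trace-preserving relation $\sum_i E_i^\dagger E_i\otimes F_i^\dagger F_i=I$ rather than the unitality condition the paper actually states.
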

	
	\begin{proof}[\textbf{Proof}]
		We have
		\begin{eqnarray}
			C_{w}(\varepsilon (\rho ) )&=& \max \limits_{L} \{- \alpha(L)+tr( L \varepsilon ( \rho )) \} \nonumber\\
			&=& \max \limits_{L}\{- \alpha(L)+tr( L \sum_{i}^{}(E_{i} \otimes F_{i}) \rho (E_{i} \dagger \otimes F_{i} \dagger )) \} \nonumber\\
			&=&\max\limits_{L} \{- \alpha(L)+ tr( \sum_{i}^{}(E_{i}\dagger \otimes F_{i} \dagger ) L(E_{i}\otimes F_{i})\rho ) \}.
		\end{eqnarray}
		
		Firstly, we prove $\sum_{i}^{}(E_{i} \dagger \otimes F_{i} \dagger ) L(E_{i} \otimes F_{i}))$ conforms to the definition of $ L $, i.e. it satisfies the properties of $ L $ in the definition
		\begin{eqnarray}
			tr(\sum_{i}^{}(E_{i} \dagger \otimes F_{i} \dagger ) L(E_{i} \otimes F_{i}))=tr( \sum_{i}^{}(E_{i}E_{i} \dagger \otimes F_{i}F_{i} \dagger)L)=tr(L)=a>0.
		\end{eqnarray}
		
		$ \sum_{i}^{}(E_{i} \dagger \otimes F_{i} \dagger ) L(E_{i} \otimes F_{i})) $ is obviously a self-conjugate positive semi-definite matrix. So $ \sum_{i}^{}(E_{i} \dagger \otimes F_{i}\dagger ) L(E_{i}\otimes F_{i})) $ meets the definition of $ L $. Next we consider the relationship between $ \alpha(L) $ and $ \alpha(\sum_{i}^{}(E_{i}\dagger \otimes F_{i} \dagger ) L(E_{i} \otimes F_{i})) $:
		\begin{eqnarray}
			\alpha(\sum_{i}^{}(E_{i} \dagger \otimes F_{i}\dagger ) L(E_{i}\otimes F_{i})) &=& \max \limits_{ | \varphi \rangle \in sep} \langle \varphi | \sum_{i}^{}(E_{i} \dagger \otimes F_{i} \dagger ) L(E_{i} \otimes F_{i}) | \varphi \rangle\nonumber\\
			&=&\max\limits_{ | \varphi \rangle \in sep} \sum_{i}^{} \langle \varphi | (E_{i} \dagger \otimes F_{i} \dagger ) L(E_{i} \otimes F_{i}) | \varphi \rangle.
		\end{eqnarray}
		
		Let $ | \varphi^*_{i} \rangle = (E_{i} \otimes F_{i}) | \varphi \rangle/ \sqrt{p_{i}} $, where $ p_{i}=tr( \langle \varphi | (E_{i}\dagger \otimes F_{i}\dagger ) (E_{i}\otimes F_{i}) | \varphi \rangle) $, and $ \sum_{i}^{}p_{i}=1 $. We have
		\begin{eqnarray}
			\alpha(\sum_{i}^{}(E_{i} \dagger \otimes F_{i} \dagger ) L(E_{i} \otimes F_{i})) &=& \max \limits_{ | \varphi \rangle \in sep} \sum_{i}^{} p_{i} \langle \varphi^*_{i} | L | \varphi^*_{i} \rangle \nonumber\\
			&=&\sum_{i}^{} p_{i} \max \limits_{ | \varphi \rangle \in sep} \langle \varphi^*_{i} | L | \varphi^*_{i} \rangle \nonumber\\
			&\leq&\sum_{i}^{} p_{i} \max \limits_{ | \varphi*_{i} \rangle \in sep} \langle \varphi^*_{i} | L | \varphi^*_{i} \rangle \nonumber\\
			&\leq&\alpha(L) .
		\end{eqnarray}
		
		We may clearly arrive at the conclusion that
		\begin{eqnarray}
			C_{w}( \varepsilon ( \rho ) )&=& - \alpha(L)+tr( L \varepsilon ( \rho )) \nonumber\\
			&=& - \alpha(L)+ tr( \sum_{i}^{}(E_{i} \dagger \otimes F_{i} \dagger ) L(E_{i} \otimes F_{i}) \rho ) \nonumber\\
			& \leq& - \alpha( \sum_{i}^{}(E_{i} \dagger \otimes F_{i} \dagger ) L(E_{i} \otimes F_{i}))+ tr( \sum_{i}^{}(E_{i} \dagger \otimes F_{i} \dagger ) L(E_{i} \otimes F_{i}) \rho) \nonumber\\
			&=& - \alpha(L')+ tr( L'\rho	) \nonumber\\
			&=& C_{w}(\rho).
		\end{eqnarray}
	\end{proof}
	
	\begin{Cor}
		$ C_{w}( \rho ) $ is entangled monotone with respect to LOCC (E4).
	\end{Cor}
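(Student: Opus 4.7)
The plan is to derive (E4) as an immediate consequence of the preceding Proposition~4, using the well-known inclusion of LOCC protocols inside the class of separable operations. Concretely, I would first recall the structural fact that every LOCC channel $\Lambda_{LOCC}$ acting on $\mathcal{H}_{A}\otimes\mathcal{H}_{B}$ admits a Kraus representation of separable form, i.e.\ there exist local operators $\{E_{i}\}$ and $\{F_{i}\}$ such that
\begin{equation}
\Lambda_{LOCC}(\rho)=\sum_{i}(E_{i}\otimes F_{i})\,\rho\,(E_{i}^{\dagger}\otimes F_{i}^{\dagger}),\qquad \sum_{i}E_{i}E_{i}^{\dagger}\otimes F_{i}F_{i}^{\dagger}=I_{\mathcal{H}_{A}\otimes\mathcal{H}_{B}}.
\end{equation}
This is exactly the hypothesis under which $\varepsilon(\rho)$ was defined in Proposition~4, so $\Lambda_{LOCC}$ is a particular instance of the separable channel $\varepsilon$.

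Next I would apply Proposition~4 directly: since $C_{w}$ is monotone under every channel of the form $\varepsilon(\rho)$, the inequality specializes to $C_{w}(\Lambda_{LOCC}(\rho))\leq C_{w}(\rho)$, which is precisely property (E4).

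The only potentially delicate point — and hence the main (mild) obstacle — is the inclusion LOCC $\subseteq$ separable operations. I would handle it by citing the standard result (e.g.\ from Bennett \emph{et al.} or the textbook treatment) and, for completeness, sketching why one round of local measurements with classical communication produces Kraus operators of tensor-product form and why composition preserves this form. Everything else in the corollary reduces to applying Proposition~4 verbatim, so the proof can be kept to a few lines.
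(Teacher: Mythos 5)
Your proposal is correct and follows exactly the paper's route: the paper's proof of this corollary is the one-line observation that LOCC is contained in the class of separable operations treated in Proposition~4, so the monotonicity inequality specializes immediately. Your additional care in recalling the separable Kraus form and the inclusion LOCC $\subseteq$ separable operations is a reasonable elaboration of the same argument, not a different approach.
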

	\begin{proof}[\textbf{Proof}]
		LOCC belongs to separable operation, and we have proved  $ C_{w}( \rho ) $ is entangled monotone with respect to separable operation.
	\end{proof}
	
	\begin{Pro}
		$ C_{w}(\rho) $ is continuous for $ \rho $, satisfying the property (E5).
	\end{Pro}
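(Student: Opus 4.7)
The plan is to exploit the fact that $C_w$ is defined as a supremum of a family of functions that are affine (hence continuous) in $\rho$, with the supremum taken over a set $M_a$ that is uniformly norm-bounded. Concretely, for each admissible $L\in M_a$, the map $\rho \mapsto -\alpha(L)+\mathrm{tr}(L\rho)$ is linear in $\rho$, and a supremum of a uniformly Lipschitz family is itself Lipschitz with the same constant; this yields continuity (indeed Lipschitz continuity) of $C_w$.

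First I would rewrite $C_w(\rho)=\max_{L\in M_a}\{-\alpha(L)+\mathrm{tr}(L\rho)\}$ and let $L_\rho^\ast$ and $L_\sigma^\ast$ denote optimizers for $\rho$ and $\sigma$ respectively (existence follows from compactness of $M_a$ in finite dimension and continuity of the objective; $\alpha(L)$ is continuous in $L$ because it is the max of $\langle\varphi|L|\varphi\rangle$ over the compact set of separable pure states). Using $L_\rho^\ast$ as a feasible, but not necessarily optimal, choice for $\sigma$ gives
\begin{eqnarray}
C_w(\rho)-C_w(\sigma) &\leq& \bigl(-\alpha(L_\rho^\ast)+\mathrm{tr}(L_\rho^\ast\rho)\bigr)-\bigl(-\alpha(L_\rho^\ast)+\mathrm{tr}(L_\rho^\ast\sigma)\bigr) \nonumber\\
&=& \mathrm{tr}\bigl(L_\rho^\ast(\rho-\sigma)\bigr),
\end{eqnarray}
and symmetrically with $L_\sigma^\ast$, so that
\begin{equation}
\bigl|C_w(\rho)-C_w(\sigma)\bigr| \leq \max_{L\in M_a}\bigl|\mathrm{tr}(L(\rho-\sigma))\bigr|. \nonumber
\end{equation}

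Next I would apply Hölder's inequality for trace class operators, $|\mathrm{tr}(L X)|\leq \|L\|_{2}\,\|X\|_{KF}$, together with the uniform bound $\|L\|_{2}\leq \mathrm{tr}(L)=a$ that holds for every positive semi-definite $L$ in $M_a$ (the largest eigenvalue is at most the sum of all nonnegative eigenvalues). This immediately gives
\begin{equation}
\bigl|C_w(\rho)-C_w(\sigma)\bigr| \leq a\,\|\rho-\sigma\|_{KF}, \nonumber
\end{equation}
so $C_w(\rho)-C_w(\sigma)\to 0$ whenever $\|\rho-\sigma\|\to 0$, using equivalence of norms in the finite dimensional setting to handle whichever norm is chosen in (E5).

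The only subtle point, which I expect to be the main obstacle, is justifying the exchange of difference and supremum cleanly: one must check that $C_w$ is actually attained (so that $L_\rho^\ast$ is a legitimate witness and not just an approximate one), or equivalently argue via an $\varepsilon$-optimal $L$ to avoid any appeal to compactness. If the attainment argument is unwelcome, I would replace $L_\rho^\ast$ by an $\varepsilon$-optimizer, carry through the same estimate to obtain $|C_w(\rho)-C_w(\sigma)|\leq a\|\rho-\sigma\|_{KF}+2\varepsilon$, and let $\varepsilon\to 0$. Everything else is a routine application of Hölder and the trace constraint defining $M_a$.
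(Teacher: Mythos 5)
Your proposal is correct and follows essentially the same route as the paper: both reduce the problem to the bound $|C_w(\rho)-C_w(\sigma)|\leq \max_{L\in M_a}|\mathrm{tr}(L(\rho-\sigma))|$ and then establish $|\mathrm{tr}(L(\rho-\sigma))|\leq a\|\rho-\sigma\|_{KF}$, the paper doing so by spectrally decomposing $\rho-\sigma$ and bounding $\langle\varphi_i|L|\varphi_i\rangle$ by $\lambda_{\max}(L)\leq \mathrm{tr}(L)=a$, which is exactly your H\"older-plus-trace-constraint estimate written out by hand. Your explicit treatment of the attainment of the maximum (or the $\varepsilon$-optimizer workaround) is a point the paper passes over silently, but it does not change the argument.
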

	\begin{proof}[\textbf{Proof}]
		Set $ \rho ' = \rho + v_{n}$ and $ \rho ' = \rho -v_{n}$, we could have
		\begin{eqnarray}
			|C_{w}(\rho)-C_{w}(\rho')| \le \max \limits_{L}|tr(Lv_{n})|,
		\end{eqnarray}
		where $ v_{n} $ is a Hermitian matrix. The spectral decompose of $v_{n}$ is $ \sum_{i}^{} \lambda_{i}| \varphi_{i}\rangle\langle\varphi_{i}| $, and $ \sum_{i}^{} \lambda_{i}=0 $. Thus
		\begin{eqnarray}
			tr(Lv_{n})= \sum_{i}^{} \lambda_{i} \langle \varphi_{i}|L| \varphi_{i} \rangle,
		\end{eqnarray}
		where $ \langle \varphi_{i}|L| \varphi_{i} \rangle= \langle \varphi_{i}^{*}| \Lambda | \varphi_{i}^{*}  \rangle \le \lambda_{max}(L) $, $ L=U^{ \dagger} \Lambda U $, $ U $ is a unitary matrix, $ \Lambda $ is a real diagonal matrix with the diagonal elements being the eigenvalues of $ L $, $ | \varphi^{*} \rangle=U| \varphi \rangle $. According to the properties of $ L $, $ |tr(Lv_{n})| $ satisfies
		\begin{eqnarray}
			|tr(Lv_{n})| \le a \sum_{i}^{}| \lambda_{i}|=atr( \sqrt{v_{n}^{ \dagger} v_{n}})=a||v_{n}||_{KF}.
		\end{eqnarray}
		
		When $ \rho'  \rightarrow\rho $, $ v_{n} \rightarrow 0 $, i.e. $||v_{n}||_{KF} \rightarrow 0 $, we have
		\begin{eqnarray}
			0 \le |C_{w}( \rho)-C_{w}( \rho')| \le \max \limits_{L}|tr(Lv_{n})| \le a||v_{n}||_{KF} \rightarrow 0,
		\end{eqnarray}
		which proves that $C_{w}( \rho) $ is continuous.
	\end{proof}
		
	\section{Lower Bound}

	\subsection{Theoretical Derivation}

		We present a lower bound using the idea of finding a specific $ L $ to simplify the calculation. Without loss of generality, we discuss the situation of $ L \in M_{1} $. Then we can set $ L^{'} = aL $ to satisfy $ L^{'} \in M_{a} $. In the following, we discuss the lower bound in two cases:  $ \rho $ is a pure state and $ \rho $ is a mixed state.

	\subsubsection{Pure State}

		We will divide the lower bound of $ C_{w}(\rho) $ into two steps: calculating $ \alpha (L) $ and $ tr(L\rho) $. If we are capable to obtain the estimate of $ \alpha (L) $ in advance, the original problem will be simplified. As a matter of fact, J.Sperling and W.Vogel\cite{J. Sperling2009} have gotten the estimate of $\alpha (L)$. We also demonstrate this result by using Schmidt decomposition rather than the Lagrange multiplier.
	\begin{mythm}
		When $ rank(L)=1 $, we could regard $ L $ as a pure state and we have
		\begin{eqnarray}
			\alpha (L)= \max \limits_{|\varphi \rangle \in sep} \langle \varphi | L | \varphi \rangle =\mu_{1},
		\end{eqnarray}
		where $ \sqrt{ \mu_{1}} $ is the maximum Schmidt coefficient of $ L $.
	\end{mythm}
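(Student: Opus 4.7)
The plan is to reduce the theorem to the classical fact that the maximal overlap of a bipartite pure state with product states equals the square of its largest Schmidt coefficient, and then prove that fact cleanly via Cauchy--Schwarz.

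First I would use the hypotheses on $L$. Since $L$ is positive semi-definite with $\operatorname{rank}(L)=1$ and $\operatorname{tr}(L)=1$ (we are in the case $L\in M_1$ as stipulated at the start of the subsection), $L$ admits the spectral form $L=|\psi\rangle\langle\psi|$ for some unit vector $|\psi\rangle\in\mathcal H_A\otimes\mathcal H_B$. This is the sense in which $L$ ``can be regarded as a pure state.'' Writing the Schmidt decomposition $|\psi\rangle=\sum_i\sqrt{\mu_i}\,|e_i\rangle|f_i\rangle$ with $\mu_1\ge\mu_2\ge\cdots\ge 0$ and $\sum_i\mu_i=1$, the quantity to be maximized becomes
\begin{equation}
\langle\varphi|L|\varphi\rangle=|\langle\varphi|\psi\rangle|^2,
\end{equation}
taken over all product unit vectors $|\varphi\rangle=|a\rangle|b\rangle$.

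Next I would write $\langle a|\langle b|\psi\rangle=\sum_i\sqrt{\mu_i}\,\langle a|e_i\rangle\langle b|f_i\rangle$ and denote $x_i=\langle a|e_i\rangle$, $y_i=\langle b|f_i\rangle$, so that $\sum_i|x_i|^2\le 1$ and $\sum_i|y_i|^2\le 1$ by Parseval. Applying the Cauchy--Schwarz inequality to $\sum_i(\sqrt{\mu_i}\,x_i)\,y_i$ and then using $\mu_i\le\mu_1$ gives
\begin{equation}
|\langle\varphi|\psi\rangle|^2\;\le\;\Bigl(\sum_i\mu_i|x_i|^2\Bigr)\Bigl(\sum_i|y_i|^2\Bigr)\;\le\;\mu_1\sum_i|x_i|^2\;\le\;\mu_1.
\end{equation}
Equality is attained by the explicit choice $|a\rangle=|e_1\rangle$, $|b\rangle=|f_1\rangle$, which yields $\langle\varphi|L|\varphi\rangle=\mu_1$. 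Combining the upper bound with this achievability shows that the supremum is realized and equals $\mu_1$.

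The argument is short, and the only place that needs care is the reduction step: one must verify that $\operatorname{rank}(L)=1$ together with $L\ge 0$ and $\operatorname{tr}(L)=1$ really forces $L=|\psi\rangle\langle\psi|$ for a \emph{unit} vector (so that $\sum_i\mu_i=1$ and $\sqrt{\mu_1}$ is indeed a genuine Schmidt coefficient rather than a rescaled one). I do not anticipate a serious obstacle; the main temptation to avoid is confusing $\mu_1$ with $\sqrt{\mu_1}$, since the statement identifies $\sqrt{\mu_1}$ as the Schmidt coefficient while $\alpha(L)$ itself equals $\mu_1$.
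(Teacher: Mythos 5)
Your proposal is correct and follows essentially the same route as the paper: Schmidt-decompose the rank-one $L$, expand the product state in the Schmidt bases, apply Cauchy--Schwarz, and saturate with $|a\rangle=|e_1\rangle$, $|b\rangle=|f_1\rangle$. The only (cosmetic) difference is how the weights are split in Cauchy--Schwarz --- you put all of $\mu_i$ on one factor and then use $\mu_i\le\mu_1$, while the paper splits $\sqrt{\mu_i}$ symmetrically via fourth roots; both yield the same bound $\mu_1$.
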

	\begin{proof}[\textbf{Proof}]
		For $ \forall | \varphi_{pure} \rangle \in \mathcal{H}^{(m)}_{A} \otimes \mathcal{H}^{(n)}_{B}(m \leq n) $, we have
		\begin{eqnarray}
			| \varphi_{pure} \rangle =\sum_{i} \sqrt{ \mu_{i}} | i_{A} i_{B} \rangle,
		\end{eqnarray}
		where $ \sqrt{\mu_{i}} $ ($ i=1,...m $) are the Schmidt coefficients, and $ \mu_{1} \geq \mu_{2} \geq ... \geq \mu_{m} \geq 0 $, $\sum_{i} \mu_{i}=1 $. $ |  i_{A} \rangle $ and $ | i_{B} \rangle $ are the orthonormal basis in $ \mathcal{H}^{(m)}_{A} $ and $ \mathcal{H}^{(n)}_{B} $ respectively. Then $ L= | \varphi_{pure} \rangle \langle \varphi_{pure} | $ could be written as $\sum_{i,j} \sqrt{\mu_{i}\mu_{j}} | i_{A} i_{B} \rangle \langle j_{A} j_{B} | $.

 		Similarly, we could write separable pure state $ | \varphi_{sep} \rangle $ in form of $ \sum_{i,j} a_{i} b_{j} | i_{A}j_{B} \rangle $. By using these representations, we could conclude that
		\begin{eqnarray}
			\langle  \varphi_{sep} | L | \varphi_{sep} \rangle &=& \langle \varphi_{A} | \otimes \langle \varphi_{B} | L | \varphi_{A} \rangle \otimes | \varphi_{B} \rangle \nonumber\\
			&=& \sum_{i,j} \sqrt{ \mu_{i} \mu_{j}}a_{i}b_{i}a_{j}^{*}b_{j}^{*} \nonumber\\
			& \leq&( \sum_{i} \sqrt{ \mu_{i}} | a_{i} | | b_{i} | )^{2} \nonumber\\
			&=&( \sum_{i} \sqrt[4]{ \mu_{i}} | a_{i} | \sqrt[4]{ \mu_{i}} | b_{i} | )^{2}\nonumber\\
			& \leq& ( \sum_{i} \sqrt{ \mu_{i}} | a_{i} |^{2}) ( \sum_{j} \sqrt{ \mu_{j}} | b_{j} |^{2}).
		\end{eqnarray}
		
		The first inequality holds by applying the absolute value inequality and the second one holds due to the Cauchy-Schwarz inequality. Note that $ \sum_{i} | a_{i} |^{2} =1 $ and $ \sum_{j} | b_{j} |^{2} =1 $. We can obtain $ \langle  \varphi_{sep} | L | \varphi_{sep} \rangle \leq \mu_{1} $ and the equal sign works if and only if it meets the condition that $ |a_{1}|=1 $, $ |b_{1}|=1 $, $ a_{i}=0 $, $ b_{i}=0 $ $ (i>1) $. Ultimately, we have $ \alpha (L)= \mu_{1} $.
	\end{proof}

		After getting the estimate of  $ \alpha (L) $, we may substitute it into $ C_{w}(\rho) $. Thus the original problem will be transformed into estimating of $ tr(L\rho) $. In the pure states, we obtain the following result.
	\begin{mythm}
		Let $ \rho $ be a pure  quantum state acting on the space of $ \mathcal{H}_{A}^{(d_{1})} \otimes \mathcal{H}_{B}^{(d_{2})} $. We have
		\begin{eqnarray}
			C_{w}( \rho) \geq \frac{ \| \rho^{T_{A}} \|_{KF}-1}{d},
		\end{eqnarray}
		where $ d=min \{ d_{1}, d_{2} \} $.
	\end{mythm}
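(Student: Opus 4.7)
The plan is to exhibit a single concrete choice of $L \in M_1$ that achieves the claimed lower bound, using the fact that $C_w(\rho) = \max_L\{-\alpha(L) + \mathrm{tr}(L\rho)\}$ is a maximum (so any feasible $L$ yields a valid lower bound). The natural candidate, motivated by Theorem 1, is a rank-one $L$, since in that case $\alpha(L)$ reduces to the squared largest Schmidt coefficient of $L$ and is easy to control.

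First I would fix a Schmidt decomposition of $\rho = |\psi\rangle\langle\psi|$, writing $|\psi\rangle = \sum_{i=1}^{d}\sqrt{\mu_i}\,|i_A i_B\rangle$ with $d = \min\{d_1,d_2\}$ and $\sum_i \mu_i = 1$. Then I would choose $L = |\phi\rangle\langle\phi|$ where $|\phi\rangle = \frac{1}{\sqrt{d}}\sum_{i=1}^{d} |i_A i_B\rangle$ is the maximally entangled state written in the same Schmidt bases. This $L$ is positive semidefinite with $\mathrm{tr}(L) = 1$, so it lies in $M_1$, and by Theorem 1 its value of $\alpha$ is the largest squared Schmidt coefficient of $|\phi\rangle$, namely $\alpha(L) = 1/d$.

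Next I would compute the overlap: $\langle\phi|\psi\rangle = \frac{1}{\sqrt{d}}\sum_{i}\sqrt{\mu_i}$, hence
\begin{equation*}
\mathrm{tr}(L\rho) = |\langle\phi|\psi\rangle|^{2} = \frac{1}{d}\Bigl(\sum_{i}\sqrt{\mu_i}\Bigr)^{2}.
\end{equation*}
The key algebraic identity I would then invoke is the standard pure-state formula
\begin{equation*}
\|\rho^{T_A}\|_{KF} = \Bigl(\sum_{i}\sqrt{\mu_i}\Bigr)^{2},
\end{equation*}
which follows by writing $\rho^{T_A} = \sum_{i,j}\sqrt{\mu_i \mu_j}\,|j_A i_B\rangle\langle i_A j_B|$ and observing that its singular values are exactly $\sqrt{\mu_i \mu_j}$. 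Combining these yields
\begin{equation*}
-\alpha(L) + \mathrm{tr}(L\rho) = \frac{\|\rho^{T_A}\|_{KF} - 1}{d},
\end{equation*}
and since $C_w(\rho)$ is the maximum over all admissible $L$, this specific choice gives the desired lower bound.

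I do not anticipate a serious obstacle; the argument is essentially a well-chosen witness plus the pure-state identity for the trace norm of the partial transpose. The only point that requires slight care is ensuring that $|\phi\rangle$ is defined on the same Schmidt bases as $|\psi\rangle$ (which is legitimate because $C_w$ is invariant under local unitaries by Proposition 3, so one may work in the Schmidt basis without loss of generality) and confirming that the chosen $L$ meets all the defining conditions of the set $M_1$, namely positivity, self-adjointness, and unit trace.
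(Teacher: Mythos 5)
Your proposal is correct and follows essentially the same route as the paper: the paper also takes a rank-one $L$ in the Schmidt basis of $\rho$, namely $L'=\sum_{i,j}\sqrt{m_im_j}\,|i_Ai_B\rangle\langle j_Aj_B|$, and then specializes to $m_i=1/d$, which is exactly your maximally entangled state $|\phi\rangle\langle\phi|$. The only cosmetic difference is that you verify the identity $\|\rho^{T_A}\|_{KF}=(\sum_i\sqrt{\mu_i})^2$ by inspecting the singular values directly, whereas the paper cites it from the literature.
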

	\begin{proof}[\textbf{Proof}]
		It is easy to deduce that $ \rho = \sum_{i,j} \sqrt{ \mu_{i} \mu_{j}} | i_{A} i_{B} \rangle \langle j_{A} j_{B} | $. Then we take a special $ L' $, which has the standard Schmidt form $ L'= \sum_{i,j} \sqrt{m_{i}m_{j}} | i_{A} i_{B} \rangle \langle  j_{A} j_{B} | $, where $ m_{1} \geq m_{2} \geq ... \geq m_{d} \geq 0 $ and $ \sum_{i} m_{i}=1 $. We can notice that $ L $ is a pure state. Then
		\begin{eqnarray}
			C_{w}( \rho) & \geq tr(L' \rho) - \alpha(L')=(\sum_{i} \sqrt{m_{i} \mu_{i}})^{2}-m_{1}.
		\end{eqnarray}
	
		By designating $ m_{i}= \frac{1}{d} $( it meets the above restraint conditions obviously), we can obtain
		\begin{eqnarray}
			C_{w}(\rho) \geq \frac{(\sum_{i} \sqrt{\mu_{i}})^{2}-1}{d}=\frac{ \| \rho^{T_{A}} \|_{KF}-1}{d}.
		\end{eqnarray}
		
		From the Ref. \cite{Kai Chen2005}, we know that $ ( \sum_{i} \sqrt{ \mu_{i}})^{2} $ is equal to $ \| R(\rho) \|_{KF} $ and $ \| \rho^{T_{A}} \|_{KF} $ for a pure state $ \rho $, where $ R(\rho) $ is the realigned matrix of $ \rho $. Here we write $ (\sum_{i} \sqrt{\mu_{i}})^{2} $ as $ \| \rho^{T_{A}} \|_{KF} $. Hence, the inequality in the previous equation holds.
	\end{proof}

		In this way and by means of Schmidt decomposition we succeeded in obtaining the lower bound of $ C_{w}(\rho) $ for pure states. For mixed states, we still attain the expression of $ \alpha (L) $ firstly.

	\subsubsection{Mixed State}

		To solve the problem for the mixed state, we introduce a new form of density operator. An arbitrary state $ \rho $ acting on $ \mathcal{H}_{A}^{(d_{1})} \otimes \mathcal{H}_{B}^{(d_{2})}$ could be Bloch represented as follows
	\begin{eqnarray}
		\rho=\frac{1}{d_{1}d_{2}}  ( I \otimes I+ \sum_{i}{s_{i} \lambda_{i}^{A} \otimes I}+\sum_{j}{I \otimes t_{j} \lambda_{j}^{B} + \sum_{i,j}{r_{ij} \lambda_{i}^{A} \otimes \lambda_{j}^{B}} } ),
	\end{eqnarray}
		where $ I $ stands for identity operator, and $ \{ \lambda_{i}^{A} \}_{i=1}^{d_{1}^{2}-1} $, $ \{ \lambda_{i}^{B} \}_{i=1}^{d_{2}^{2}-1} $ are generalized Gell-Mann matrices, which satisfy $ tr( \lambda_{i} \lambda_{j})=2 \delta_{ij} $, $\lambda_{i}^{\dagger }= \lambda_{i} $. It's clear that $ s_{i} $, $ t_{i} $ and $ r_{ij} $  are all real. Hence, the value of $ \alpha (L) $ could be estimated again when $ L $ is a mixed state.
	\begin{mythm}
		Let $ s_{i}=0 $, $ t_{i}=0 $, $ L $ is written as Bloch representation as follows
		\begin{eqnarray}
			L =\frac{1}{d_{1}d_{2}} ( I \otimes I+\sum_{i,j}{r_{ij }\lambda_{i}^{A} \otimes \lambda_{j}^{B}} ) .
		\end{eqnarray}
	
		Then the estimate of $ \alpha (L)$ is
		\begin{eqnarray}
			\alpha (L)= \frac{1}{d_{1}d_{2}}+ \frac{4}{d_{1}^{2}d_{2}^{2}} \| R_{L} \|_{2},
		\end{eqnarray}
		where $ R_{L}=  ( r_{ij} ) $.
	\end{mythm}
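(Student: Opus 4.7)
The plan is to parametrize a separable pure state as $|\varphi\rangle = |a\rangle \otimes |b\rangle$ and substitute it directly into $L$ written in Bloch form. With $s_i = t_j = 0$ and $\mathrm{tr}(\lambda_i^A) = \mathrm{tr}(\lambda_j^B) = 0$, every cross term pairing a single Gell-Mann factor with an identity evaluates to zero, leaving
\begin{equation*}
\langle \varphi | L | \varphi \rangle = \frac{1}{d_1 d_2}\Bigl(1 + \vec{x}^{T} R_L \vec{y}\Bigr),
\end{equation*}
where $x_i := \langle a | \lambda_i^A | a \rangle$ and $y_j := \langle b | \lambda_j^B | b \rangle$. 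Maximizing $\langle\varphi|L|\varphi\rangle$ over separable pure states thus reduces to maximizing the bilinear form $\vec{x}^{T} R_L \vec{y}$ over the sets of admissible Bloch vectors $\vec{x}, \vec{y}$.

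For the upper bound I would write the singular value decomposition $R_L = \sum_k \sigma_k u_k v_k^{T}$ with $\sigma_1 = \|R_L\|_{2}$, and apply Cauchy--Schwarz to obtain
\begin{equation*}
\vec{x}^{T} R_L \vec{y} \leq \|R_L\|_{2}\, \|\vec{x}\|\, \|\vec{y}\|.
\end{equation*}
The pure-state Bloch norms $\|\vec{x}\|$ and $\|\vec{y}\|$ come from pairing $\mathrm{tr}(\rho_A^{2}) = \mathrm{tr}(\rho_B^{2}) = 1$ with the normalization $\mathrm{tr}(\lambda_i \lambda_j) = 2\delta_{ij}$; these two conditions fix each norm as a pure function of $d_1$ or $d_2$. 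Substituting the resulting scalar values back into the Cauchy--Schwarz bound and adding the constant $1/(d_1 d_2)$ from the identity term is what I expect to produce the prefactor $4/(d_1^{2} d_2^{2})$ in the stated closed-form expression for $\alpha(L)$.

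To promote this inequality to the claimed equality, I would exhibit $|a\rangle$ and $|b\rangle$ whose Bloch vectors are proportional to the top singular vectors $u_1$ and $v_1$ of $R_L$, so that both Cauchy--Schwarz and the singular-value inequality are saturated simultaneously. This is the step I expect to be the main obstacle. For the qubit--qubit case $d_1 = d_2 = 2$ the pure-state Bloch vectors cover the entire sphere $S^{2}$ and the construction is transparent. In higher dimensions, however, only a strict subvariety of the relevant sphere in $\mathbb{R}^{d_k^{2}-1}$ is the Bloch image of a pure state, so attainability cannot proceed by a naive angular argument. The strategy I would pursue is to restrict to the two-dimensional subspaces of $\mathcal{H}_A$ and $\mathcal{H}_B$ adapted to the leading singular directions of $R_L$, embed the corresponding qubit pair into the full Hilbert space, and verify that its induced Bloch vectors are collinear with $u_1$ and $v_1$; this reduces the high-dimensional attainability question to the already-handled qubit case and completes the proof.
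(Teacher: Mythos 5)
Your reduction to maximizing the bilinear form $\vec{x}^{T}R_{L}\vec{y}$ over product Bloch vectors is exactly the route the paper takes (it writes the product state in Bloch form, uses $R_{1}=S_{1}T_{1}^{T}$, and applies an SVD of $R_{L}$). However, the step you defer to ``the norm computation'' is precisely where the argument breaks, and carrying it out shows it does not produce the stated prefactor. With $x_{i}=\langle a|\lambda_{i}^{A}|a\rangle$ and $\mathrm{tr}(\lambda_{i}\lambda_{j})=2\delta_{ij}$, purity of $|a\rangle$ gives $\|\vec{x}\|^{2}=2(d_{1}-1)/d_{1}$ and likewise $\|\vec{y}\|^{2}=2(d_{2}-1)/d_{2}$, so Cauchy--Schwarz yields $\alpha(L)\leq \frac{1}{d_{1}d_{2}}+\frac{2\sqrt{(d_{1}-1)(d_{2}-1)}}{(d_{1}d_{2})^{3/2}}\|R_{L}\|_{2}$. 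This coincides with the claimed $\frac{1}{d_{1}d_{2}}+\frac{4}{d_{1}^{2}d_{2}^{2}}\|R_{L}\|_{2}$ only when $d_{1}=d_{2}=2$; for any larger dimensions your bound is strictly larger than the theorem's right-hand side, so the inequality direction of your plan already fails to deliver the statement. (The paper's proof sidesteps this by simply imposing $\|S_{1}\|_{2}=\|T_{1}\|_{2}=1$ on the Bloch vectors of the product state; in its normalization the pure-state norm is actually $\sqrt{d(d-1)/2}$, so that constraint is likewise correct only for qubits. You have therefore located a genuine weak point of the theorem itself, not merely of your own argument.)

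The attainability repair you propose does not close the gap either. A pure state supported on a two-dimensional subspace of $\mathcal{H}_{A}$ has a Bloch vector whose components along the diagonal Gell-Mann matrices are fixed by the choice of subspace and whose freely adjustable part lives only in the span of the three generators supported on that subspace; a generic top singular vector $u_{1}$ of $R_{L}$ lies in none of these images, so you cannot in general make $\vec{x}$ collinear with $u_{1}$ and $\vec{y}$ with $v_{1}$ by embedding a qubit pair. The honest formulation is that one must maximize $\vec{x}^{T}R_{L}\vec{y}$ over the joint numerical ranges of the $\lambda_{i}^{A}$ and $\lambda_{j}^{B}$, which for $d>2$ are proper, non-spherical subsets of their ambient spheres, and no closed form of the stated type follows. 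For $d_{1}=d_{2}=2$ everything you wrote goes through (the Bloch image is the full sphere $S^{2}$, the norms are $1$ in the relevant normalization) and reproduces the paper's computation; beyond that, both the constant and the saturation argument would have to be replaced.
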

	\begin{proof}[\textbf{Proof}]
		For Hermitian matrix $ L $, $ L $ can be written as
		\begin{eqnarray}
			L = \frac{1}{d_{1}d_{2}}  ( I \otimes I+ \sum_{i}{s_{i} \lambda_{i}^{A} \otimes I}+ \sum_{j}{I \otimes t_{j}  \lambda_{j}^{B} + \sum_{i.j}{r_{ij} \lambda_{i}^{A} \otimes \lambda_{j}^{B}} }).
		\end{eqnarray}
	
		For separable state $\rho$, it can be written as same as $L$:
		\begin{eqnarray}
			\rho= \frac{1}{d_{1}d_{2}} ( I\otimes I+\sum_{i}{s_{i}^{'} \lambda_{i}^{A} \otimes I}+\sum_{j}{I\otimes t_{j}^{'} \lambda_{j}^{B} + \sum_{i,j}{r_{ij}^{'} \lambda_{i}^{A} \otimes \lambda_{j}^{B}} }  ).
		\end{eqnarray}

		Assume $ S $, $ S_{1} $, $ T $, $ T_{1} $ are vectors constituted by $ s_{i} $, $ s_{i}' $, $ t_{j} $, $ t_{j}' $ respectively, $ R_{L}$ and $ R_{1} $ are the correlation matrices of $ L $ and $ \rho $. We could prove $ R_{1}=S_{1}T_{1}^{T} $ as $ \rho $ is a separable state, then
		\begin{eqnarray}
			\alpha (L)=\frac{1}{d_{1}^{2} d_{2}^{2}}(d_{1}d_{2}+ \max \limits_{ \| S_{1} \|_{2}=1, \| T_{1} \|_{2}=1} ( 2 d_{2} S_{1}^{T}S + 2 d_{1} T_{1}^{T}T+4 S_{1}^{T}R_{L}T_{1}) ).
		\end{eqnarray}
	
		Let $ L = \frac{1}{d_{1}d_{2}} ( I \otimes I + \sum{r_{ij} \lambda_{i}^{A} \otimes \lambda_{j}^{B}}) $. We may simplify the above formula as
		\begin{eqnarray}
			\alpha (L)=\frac{1}{d_{1}d_{2}}+\max\limits_{ \| S_{1} \|_{2}=1, \| T_{1} \|_{2}=1} \frac{4}{d_{1}^{2} d_{2}^{2}}S_{1}^{T}R_{L}T_{1} \label{eqa37}.
		\end{eqnarray}
	
		Assume that the singular value decomposition(SVD) form of $ R_{L} $ is $ U\Lambda V^{\dagger} $, $ \Lambda=diag(\sigma_{i})(\sigma_{1}\geq\sigma_{2}\geq...) $. Set $S_{1}^{'T}=S_{1}^{T}U$, $T_{1}^{'}=V^{\dagger}T_{1} $, $ S_{1}^{T}R_{L}T_{1} $ can be written as $ S_{1}^{'T}\Lambda T_{1}^{'} $. We have the following inequality
		\begin{eqnarray}
			S_{1}^{'T}\Lambda T_{1}^{'}&=\sum s_{i}t_{j}\sigma_{i}\leq \sqrt{(\sum|s_{i}|^{2}\sigma_{i})(\sum|t_{i} |^{2}\sigma_{i})}\leq \sigma_{1}. \label{eqa38}
		\end{eqnarray}
	
		Obviously, the upper bound in Eq(\ref{eqa38}) is saturated. From Eq(\ref{eqa37}) and Eq(\ref{eqa38}), we could conclude that
		\begin{eqnarray}
			\alpha (L)&=&\frac{1}{d_{1}d_{2}}+\frac{4}{d_{1}^{2} d_{2}^{2}}\sigma _{max}(R_{L})\\
			&=&\frac{1}{d_{1}d_{2}}+\frac{4}{d_{1}^{2} d_{2}^{2}}  \| R_{L}   \|_{2} .
		\end{eqnarray}
	\end{proof}

		After getting the estimate of  $ \alpha (L) $ when $ L $ is a special mixed state, we may firstly discuss the lower bound of $C_{w}(\rho) $ when $ \rho $ acts on the $ \mathcal{H}^{d_{1}} \otimes \mathcal{H}^{d_{2}} $ space. Then we improve it on the $\mathcal{H}^{2} \otimes \mathcal{H}^{2} $ space.
	
	\begin{mythm}
		Considering the special $ L $ which can be written as the following Bloch representation
		\begin{eqnarray}
			L =\frac{1}{d_{1}d_{2}}  ( I\otimes I+\sum{r_{ij}\lambda_{i}\otimes\lambda_{j}}   ) .
		\end{eqnarray}
	
		Assume $ R_{L}=(r_{ij}) $, when $ R_{\rho} \neq 0 $, we have
		\begin{eqnarray}
			C_{w}(\rho) \geq \frac{2(  \| R_{\rho}   \|_{KF}-1)}{d_{1}^{2}d_{2}^{2} \sqrt{rank(R_{\rho})}}.
		\end{eqnarray}
	
		When $ R_{\rho}=0 $, we have $ C_{w}(\rho) \geq 0 $.
	\end{mythm}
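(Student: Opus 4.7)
The plan is to exhibit one specific $L$ of the restricted Bloch form in the statement and use $C_w(\rho)\geq -\alpha(L)+\mathrm{tr}(L\rho)$. First I would expand $\mathrm{tr}(L\rho)$ in Bloch coordinates using the orthogonality $\mathrm{tr}(\lambda_i\lambda_j)=2\delta_{ij}$ together with $\mathrm{tr}(\lambda_i)=0$. Because the chosen $L$ carries no $\lambda_i^A\otimes I$ or $I\otimes\lambda_j^B$ piece, every cross term involving the $s_i'$ and $t_j'$ of a general $\rho$ drops out, leaving
\begin{eqnarray*}
\mathrm{tr}(L\rho)=\frac{1}{d_1 d_2}+\frac{4}{d_1^2 d_2^2}\,\mathrm{tr}(R_L^{T}R_\rho).
\end{eqnarray*}
Combining with Theorem 3's formula $\alpha(L)=\frac{1}{d_1 d_2}+\frac{4}{d_1^2 d_2^2}\|R_L\|_2$ cancels the $\frac{1}{d_1 d_2}$ and reduces the task to maximising
\begin{eqnarray*}
\frac{4}{d_1^2 d_2^2}\bigl(\mathrm{tr}(R_L^{T}R_\rho)-\|R_L\|_2\bigr)
\end{eqnarray*}
over real $R_L$ for which the corresponding $L$ lies in $M_1$.

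The second step is a concrete ansatz. Let $R_\rho=U_r\Sigma_r V_r^{\dagger}$ be the compact SVD with $r=\mathrm{rank}(R_\rho)$, which can be chosen real since $R_\rho$ is real. I would pick $R_L=\tfrac{1}{2\sqrt{r}}\,U_r V_r^{\dagger}$. All nonzero singular values of $R_L$ are then equal to $\tfrac{1}{2\sqrt{r}}$, so $\|R_L\|_2=\tfrac{1}{2\sqrt{r}}$, and because its singular vectors are aligned with those of $R_\rho$, a direct calculation gives $\mathrm{tr}(R_L^{T}R_\rho)=\tfrac{1}{2\sqrt{r}}\,\mathrm{tr}(\Sigma_r)=\tfrac{1}{2\sqrt{r}}\,\|R_\rho\|_{KF}$. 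Substituting these into the reduced inequality yields exactly $C_w(\rho)\geq\tfrac{2(\|R_\rho\|_{KF}-1)}{d_1^2 d_2^2\sqrt{r}}$.

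The main obstacle — and the reason for the particular scale $\tfrac{1}{2\sqrt{r}}$ — is checking that this $L$ is genuinely in $M_1$. The trace is $1$ automatically and Hermiticity follows from the reality of $R_L$, so the crux is positivity. For this I would bound the traceless part $M=\sum_{i,j}r_{ij}\lambda_i^A\otimes\lambda_j^B$ in operator norm via its Hilbert-Schmidt norm: applying the Gell-Mann orthogonality twice gives $\|M\|_{HS}^2=4\|R_L\|_{HS}^2=4\cdot\tfrac14=1$, hence $\|M\|_2\leq\|M\|_{HS}=1$, and Hermiticity of $M$ puts all its eigenvalues in $[-1,1]$, so $I+M\geq 0$ and $L\geq 0$. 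Finally, the degenerate case $R_\rho=0$ is immediate: taking $L=\tfrac{1}{d_1 d_2}I$ gives $W_E=0$ and hence $C_w(\rho)\geq 0$, exactly as already observed after Proposition 2.
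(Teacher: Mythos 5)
Your proposal is correct and follows essentially the same route as the paper: the same reduction of $\mathrm{tr}(L\rho)-\alpha(L)$ to $\frac{4}{d_1^2d_2^2}(\mathrm{tr}(R_L^TR_\rho)-\|R_L\|_2)$, the same choice $R_L=\frac{1}{2\sqrt{r}}U_rV_r^T$ (the paper writes this as $\frac{1}{c}U\Sigma'V^T$ with $\Sigma'$ the sign matrix and $c=2\sqrt{\mathrm{rank}(R_\rho)}$), and the same positivity check via $\|M\|_2\le\|M\|_{HS}=2\|R_L\|_{HS}=1$. Your treatment of the $R_\rho=0$ case also matches the paper's.
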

	\begin{proof}[\textbf{Proof}]
		From Theorem 3, the above $ L $ could derive $ \alpha (L)= \frac{1}{d_{1}d_{2}}+\max\limits_{ \| S_{1} \|_{2}=1, \| T_{1} \|_{2}=1} \frac{4}{d_{1}^{2} d_{2}^{2}}S_{1}^{T}R_{L}T_{1} $. Then
		\begin{eqnarray}
			C_{w}( \rho) \geq tr(L \rho)- \alpha(L) = \frac{4}{d_{1}^{2}d_{2}^{2}}(tr(R_{L}^{T}R_{ \rho})- \| R_{L} \|_{2}).
		\end{eqnarray}
	
		Let $ R_{\rho}=U \Sigma V^{T} $ be the singular value decomposition of $R_{ \rho}$, where $ U $ and $ V $ are orthogonal matrices. Assume $ \Sigma' $ is a diagonal matrix which satisfies $tr(\Sigma'^{T} \Sigma)=tr( \Sigma)= \| R_{ \rho}   \|_{KF} $, with the diagonal element to be $ sgn(\sigma_{i}) $, where $ \sigma_{i} $ is the element of $ \Sigma $ in corresponding position. Assume $ R_{L}=\frac{1}{c}U \Sigma' V^{T} $, then we have
		\begin{eqnarray}
			C_{w}(\rho) \geq \frac{4(  \| R_{\rho}   \|_{KF}-1)}{d_{1}^{2}d_{2}^{2}c}.
		\end{eqnarray}
	
		Next we will prove that when $ c=2\sqrt{rank(R_{\rho})} $, $ L\geq 0 $. When $ c=2\sqrt{rank(R_{\rho})} $, and $ rank(R_{\rho}) \neq 0 $, we have
		\begin{eqnarray}
			\| \sum{r_{ij}\lambda_{i} \otimes \lambda_{j}} \|_{HS}&=& \sqrt{tr( \sum{r_{ij}r_{sk} \lambda_{i} \lambda_{s} \otimes \lambda_{j} \lambda_{k}} )} \nonumber\\
			&=&\sqrt{4\sum{r_{ij}^2}}\nonumber\\
			&=&2  \| R_{L}   \|_{HS}
		\end{eqnarray}
		and
		\begin{eqnarray}
			\| R_{L}   \|_{HS}&=&\sqrt{tr(\frac{1}{c^2} V \Sigma'^{T} U^{T} U \Sigma' V)}\nonumber\\
			&=&\frac{1}{c}\sqrt{tr(\Sigma'^{T} \Sigma')}\nonumber\\
			&=&\frac{1}{c}\sqrt{rank(R_{\rho})}.
		\end{eqnarray}
	
		Then $ \| \sum{r_{ij}\lambda_{i}\otimes\lambda_{j}} \|_{HS} = 2 \| R_{L} \|_{HS}=1$. So we have $ \lambda_{max}( \sum{r_{ij} \lambda_{i} \otimes \lambda_{j}} ) \leq \| \sum{r_{ij} \lambda_{i} \otimes \lambda_{j}} \|_{HS} = 1$. Thus $ L = \frac{1}{d_{1}d_{2}} ( I \otimes I + \sum{r_{ij} \lambda_{i} \otimes \lambda_{j}} ) \geq 0$. So we get
		\begin{eqnarray}
			C_{w}(\rho) \geq \frac{2( \| R_{\rho} \|_{KF}-1)}{d_{1}^{2}d_{2}^{2} \sqrt{rank(R_{ \rho})}}.
		\end{eqnarray}
	
		When $ rank(R_{\rho})=0 $, i.e. $ R_{\rho}=0 $, we have $ C_{w}(\rho) \geq 0$, which ends the proof.
	\end{proof}
	
		\begin{mythm}
		For $ \rho \in  \mathcal{H}^{2} \otimes \mathcal{H}^{2} $, we have
		\begin{equation}
			C_{w}(\rho)\geq \frac{1}{4}\max\{ 0, m_{1}+m_{2}-sgn(|R_{\rho}|)m_{3}-1, \frac{m_{1}+m_{2}+sgn(|R_{\rho}|)m_{3}-1}{3} \},
		\end{equation}
		where $ R_{\rho} $ is the correlation matrix of $ \rho $, and $m_{i}$ ($ m_{1} \geq m_{2} \geq m_{3} \geq 0 $) is the singular value of $ R_{\rho} $.
	\end{mythm}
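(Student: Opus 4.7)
The plan is to exploit local unitary invariance (Proposition~3) together with the $SU(2)\to SO(3)$ double cover to reduce $R_{\rho}$ to a canonical diagonal form, and then apply the general estimate $C_{w}(\rho)\ge tr(L\rho)-\alpha(L)$ (from Theorems~3--4) to three carefully chosen test operators. Under a local unitary $U_{A}\otimes U_{B}\in SU(2)\otimes SU(2)$, the correlation matrix transforms as $R_{\rho}\mapsto O_{A}R_{\rho}O_{B}^{T}$ with $O_{A},O_{B}\in SO(3)$ the images of $U_{A},U_{B}$. Since $\det O_{A}=\det O_{B}=+1$, the sign $\epsilon=sgn(|R_{\rho}|)$ is preserved, so an $SO(3)\times SO(3)$-valued SVD brings $R_{\rho}$ to $\mathrm{diag}(m_{1},m_{2},\epsilon m_{3})$; by (E3) this leaves $C_{w}(\rho)$ unchanged. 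For any $L=\frac{1}{4}(I\otimes I+\sum_{i,j}r_{ij}\sigma_{i}\otimes\sigma_{j})$ of the form used in Theorem~3, the ``$s,t$'' components of $\rho$ drop out of $tr(L\rho)$ by orthogonality of the Pauli basis, leaving $tr(L\rho)-\alpha(L)=\frac{1}{4}(tr(R_{L}^{T}R_{\rho})-\|R_{L}\|_{2})$.

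I would then plug in three explicit witnesses. First, $L=I/4$ gives $W_{E}=0$ and the trivial bound $C_{w}(\rho)\ge 0$. Second, $L=|\Psi^{+}\rangle\langle\Psi^{+}|$ has $R_{L}=\mathrm{diag}(1,1,-1)$ and $\|R_{L}\|_{2}=1$, so $tr(R_{L}^{T}R_{\rho})=m_{1}+m_{2}-\epsilon m_{3}$ and hence $C_{w}(\rho)\ge \frac{1}{4}(m_{1}+m_{2}-\epsilon m_{3}-1)$. Third, $L=\frac{1}{4}(I\otimes I+\frac{1}{3}\sum_{i=1}^{3}\sigma_{i}\otimes\sigma_{i})$ has $R_{L}=\frac{1}{3}I$ and $\|R_{L}\|_{2}=\frac{1}{3}$; its Bell-basis eigenvalues $\{\frac{1}{3},\frac{1}{3},\frac{1}{3},0\}$ certify $L\ge 0$, giving $C_{w}(\rho)\ge \frac{1}{12}(m_{1}+m_{2}+\epsilon m_{3}-1)$. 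Combining the three bounds yields the claimed maximum.

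The main obstacle is justifying the canonical form in the first step: one must argue that local unitaries realize precisely the $SO(3)\times SO(3)$ action on $R_{\rho}$---rather than a generic $O(3)\times O(3)$ action that would allow arbitrary sign flips---so that the signed singular value $\epsilon m_{3}$, not merely $m_{3}$, is the correct canonical invariant. This is exactly where the $SU(2)\to SO(3)$ double cover enters, and it is what allows the $2\otimes 2$ estimate to improve upon the general bound in Theorem~4 (whose sufficient condition $\|R_{L}\|_{HS}\le 1/2$ for $L\ge 0$ would have excluded the Bell-state witness used here). Once the canonical form is established and positivity of the two non-trivial $L$'s is verified, the remaining manipulations are direct Bloch-representation bookkeeping.
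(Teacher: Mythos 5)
Your proposal is correct and follows essentially the same route as the paper: a determinant-preserving $SO(3)\times SO(3)$ diagonalization of $R_{\rho}$ via local unitaries, followed by the bound $C_{w}(\rho)\ge \frac{1}{4}(tr(R_{L}^{T}R_{\rho})-\|R_{L}\|_{2})$ evaluated on test operators whose correlation matrices are $0$, $\mathrm{diag}(1,1,-1)$ and $\frac{1}{3}I$ --- precisely the relevant vertices of the tetrahedron $\Gamma'$ the paper optimizes over. The only difference is cosmetic: you certify positivity of the two nontrivial witnesses directly from their Bell-basis eigenvalues instead of invoking the Horodecki tetrahedron characterization and a linear-programming argument over it, which is a perfectly adequate (and arguably cleaner) way to establish the stated lower bound.
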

	\begin{proof}[\textbf{Proof}]
		From Theorem 3, when $ L' =\frac{1}{4} ( I\otimes I+\sum{r_{ij}'\lambda_{i}\otimes\lambda_{j}}   )  $, we have $\alpha ( L' )=\frac{1}{4} ( 1+ \| R_{L'} \|_{2} )$, $ R_{L'}=(r_{ij}') $. For $ \rho \in \mathcal{H}^{2} \otimes \mathcal{H}^{2} $, $ \rho =\frac{1}{4} ( I\otimes I +\vec{s} \cdot\vec {\lambda }\otimes I+I\otimes \vec{t}\cdot \vec{\lambda }+\sum_{}^{}r_{ij}\lambda _{i}\otimes \lambda _{j} ) $ and $ R_{\rho}= ( r_{ij} ) $, we can select an appropriate local unitary transformation $ U_{1}\otimes U_{2} $ on $ \rho $ to diagonalize $ R_{\rho} $, then the transformed $ R_{\rho} $ is $ \Sigma_{\rho}= O_{1}^{T}R_{\rho} O_{2}=diag(m_{1},m_{2},sgn(|R_{\rho}|)m_{3}) $ ($ m_{1} \geq m_{2} \geq	m_{3} \geq 0 $), where $O_{1}$ and $O_{2}$ are real orthogonal matrices satisfying $|O_{1}|=|O_{2}|=1$. We can construct $L'=\frac{1}{4}  ( I\otimes I+\sum {r_{ij}}' \lambda _{i}\otimes\lambda_{j} )$, where ${R}'= ( {r_{ij}}' )=O_{1}^{T}\Sigma_{L'}O_{2}$, $ \Sigma_{L'}=diag(r_{1},r_{2},r_{3}) $, then
		\begin{eqnarray}
			C_{w}(\rho) &\geq& \max\limits_{L'\geq 0} \{ tr(L' \rho)-\alpha ( L' ) \}\\
			&=&\max\limits_{L'\geq 0} \{ \frac{1}{4}( tr(R_{L}^{T}R_{\rho})-\max\limits_{i}|r_{i}| )  \} \\
			&=&\max\limits_{L'\geq 0} \{\frac{1}{4}(\textbf{r}\cdot \textbf{m}-\max\limits_{i}|r_{i}|)  \},
		\end{eqnarray}
		where $ \textbf{r}=(r_{1},r_{2},r_{3}) $ and $ \textbf{m}=(m_{1},m_{2},sgn(|R_{\rho}|)m_{3}) $. From Ref. \cite{Ryszard Horodecki1996}, we know that the above matrix $ L' $ is positive semi-definite if and only if $\textbf{r} $ belongs to the tetrahedron $ \Gamma  $ with vertices $ \textbf{t}_{1}=(-1,-1,-1) $, $ \textbf{t}_{2}=(1,1,-1) $, $ \textbf{t}_{3}=(1,-1,1) $, $ \textbf{t}_{4}=(-1,1,1) $. Since tetrahedron $ \Gamma $ is symmetric about three axes and three coordinate planes, $ \textbf{r} $ still belongs to tetrahedron $ \Gamma $ when we change the order of elements or the sign of the two elements.
		
		When $ \textbf{r} $ satisfies $ \frac{1}{4}(\textbf{r}\cdot \textbf{m}-\max\limits_{i}|r_{i}|)= \max\limits_{L'\geq 0} \{\frac{1}{4}(\textbf{r}\cdot \textbf{m}-\max\limits_{i}|r_{i}|)  \}$, we have $r_{1}\geq r_{2}\geq |r_{3}|\geq 0$. Otherwise, For $ \forall \textbf{r}=(r_{1},r_{2},r_{3}) \in \Gamma $, we can change $ \textbf{r} $ to $ \textbf{r}'=(r'_{1},r'_{2},r'_{3}) \in \Gamma (r'_{1}\geq r'_{2} \geq |r'_{3}|)$ and $ \textbf{r}'-\textbf{r}=(c_{1},-c_{1}+c_{2},-c_{2}+c_{3})(c_{i}\geq 0,i=1,2,3) $, then
		\begin{eqnarray}
			\frac{1}{4}(\textbf{r}'\cdot \textbf{m}-\max\limits_{i}|r_{i}|)- \frac{1}{4}(\textbf{r}\cdot \textbf{m}-\max\limits_{i}|r'_{i}|) &=&\frac{1}{4}(\textbf{r}'-\textbf{r})\cdot \textbf{m}\\
			&=&\frac{1}{4}[c_{1}(m_{1}-m_{2})+c_{2}(m_{2}-m_{3})+c_{3}m_{3}]\\
			&\geq& 0.
		\end{eqnarray}
		
		We can simplify the problem to
		\begin{eqnarray}
			C_{w}(\rho)\geq \max\limits_{\textbf{r} \in \Gamma}\frac{1}{4}(\textbf{r} \cdot \textbf{b}),
		\end{eqnarray}
		where $\textbf{b}=(m_{1}-1,m_{2},sgn(|R_{\rho}|)m_{3}) $ and $ r_{1}\geq r_{2} \geq |r_{3}| $. Hence, $ \textbf{r} $ is in a new tetrahedron $ \Gamma' $ with vertices $ \textbf{t}_{1}'=(0,0,0) $, $ \textbf{t}_{2}'=(1,1,-1) $, $ \textbf{t}_{3}'=(1,0,0) $, $ \textbf{t}_{4}'=(1/3,1/3,1/3) $. Since the maximum value of $ \textbf{r}\cdot \textbf{b} $ can be obtained on the corners of the tetrahedron, and $ \textbf{t}_{2}'\cdot \textbf{b} \geq \textbf{t}_{3}' \cdot\textbf{b}$ always holds. Then we have
		\begin{eqnarray}
			C_{w}(\rho)\geq \frac{1}{4}\max\{ 0, m_{1}+m_{2}-sgn(|R_{\rho}|)m_{3}-1, \frac{m_{1}+m_{2}+sgn(|R_{\rho}|)m_{3}-1}{3} \}.
		\end{eqnarray}
	\end{proof}
	
	\subsection{Numerical Simulation }

		The measurable lower bound can be used to check whether the quantum state is entangled. Next we discuss the lower bound for quantum states in systems with different dimensions.

	\subsubsection{$ \rho \in \mathcal{H}^{2} \otimes \mathcal{H}^{2} $}

		When $ \rho \in  \mathcal{H}^{2} \otimes \mathcal{H}^{2} $, we have two measures to discuss the lower bound of the quantum state. Firstly, let's concentrate on this pure quantum state\cite{Ming Li2010}
		\begin{eqnarray}
			\rho = | \varphi \rangle \langle\varphi |,
		\end{eqnarray}
		where $  | \varphi \rangle =(a,0,0,1/\sqrt{2})^{T} /\sqrt{a^{2}+1/2} $. We can get a group of the lower bound of this quantum state by changing the value of $ a $, as shown Fig.1.
	\begin{figure}[H]
	\centering
	\includegraphics[height=6cm,width=8cm]{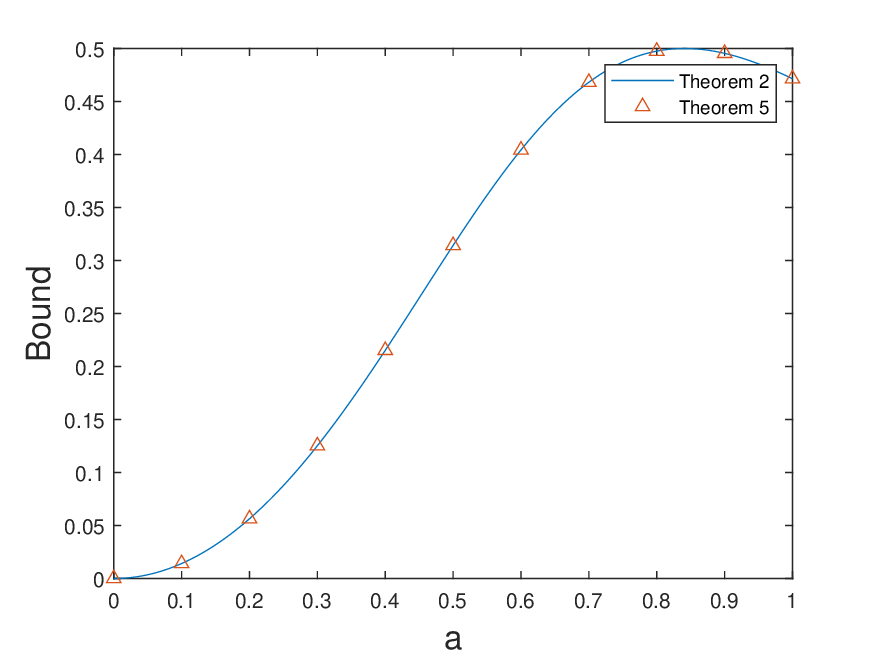}		
	\caption{Lower bounds of $C_{w}(\rho)$, where $ \rho \in  \mathcal{H}^{2} \otimes \mathcal{H}^{2} $ and $\rho$ is pure.}
	\end{figure}
		It is clear that Theorem 2 and Theorem 5 are equivalent in the pure state when $ \rho \in \mathcal{H}^{2} \otimes \mathcal{H}^{2} $. The reason is that the particular $ L $ in Theorem 2 is the maximally entangled state corresponding to the pure state. And Theorem 5 is equivalent to finding the lower bound by checking the maximally entangled state. The difference between them is that Theorem 2 uses the measure of Schmidt decomposition and Theorem 5 uses the measure of Bloch representation. Similar to the pure state, we have two ways to get the value of the lower bound of mixed state. Focusing on the following mixed quantum state:
		\begin{eqnarray}
		\rho =\frac{x}{4}I+ ( 1-x ) | \varphi \rangle \langle\varphi |,
		\end{eqnarray}
		where $| \varphi   \rangle$ has the same definition as the upper one. For $x$=0.1, we have the lower bounds for $C_{w}(\rho)$, as shown Fig.2.
	\begin{figure}[H]
	\centering
	\includegraphics[height=6cm,width=8cm]{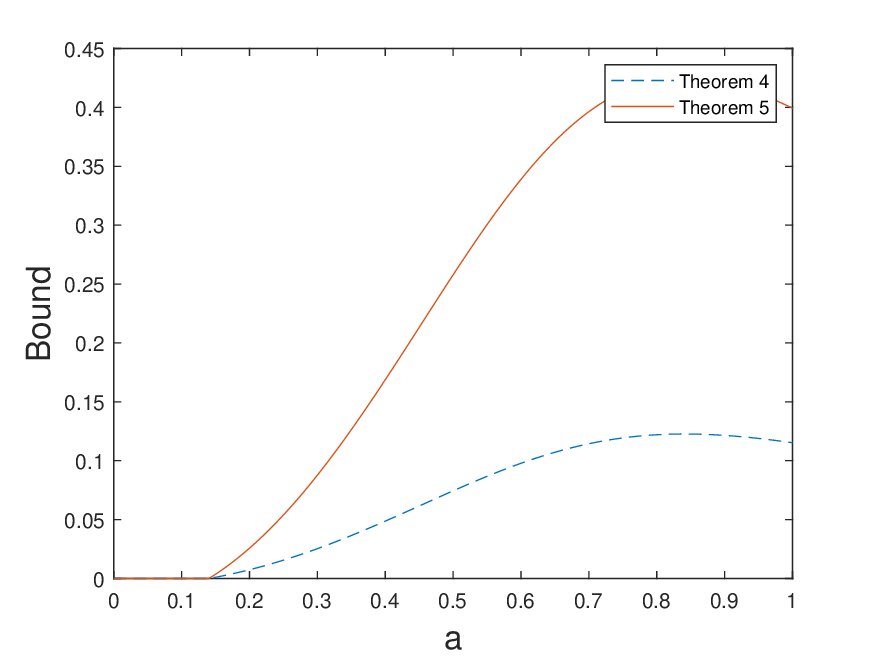}	
	\caption{Lower bounds of $C_{w}(\rho)$, where $ \rho \in \mathcal{H}^{2} \otimes \mathcal{H}^{2} $ and $\rho$ is mixed for $x$=0.1.}
	\end{figure}

		Aiming to compare the different quantum state's influence on the same measure, we figure the similar picture with $x$=0.01, as shown Fig.3.

	\begin{figure}[H]
	\centering
	\includegraphics[height=6cm,width=8cm]{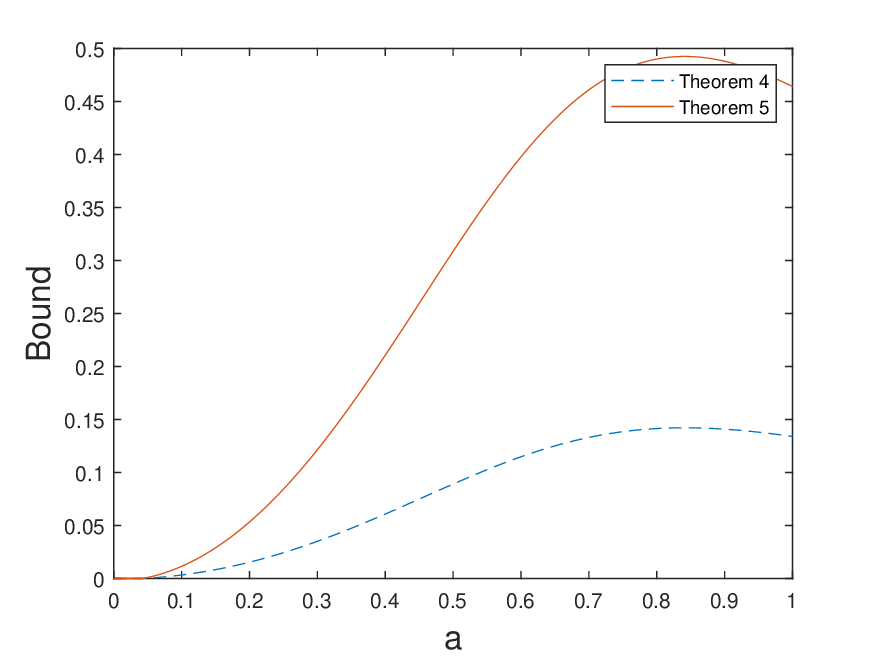}	
	\caption{Lower bounds of $C_{w}(\rho)$, where $ \rho \in \mathcal{H}^{2} \otimes \mathcal{H}^{2} $ and $\rho$ is mixed for x=0.01.}
	\end{figure}

		We can see that the lower bounds of $C_{w}(\rho)$ increase as $x$ gets lower. And Theorem 5 gives us a better result than Theorem 4 when we discuss the quantum state at $ 2\otimes 2 $ Hilbert spaces.

	\subsubsection{$ \rho \in  \mathcal{H}^{3} \otimes \mathcal{H}^{3} $}

		In the same way, we can analyze the quantum state which is higher-dimensional. To avoid the hassle, we'll just cover the quantum states that belong to $\mathcal{H}^{3} \otimes \mathcal{H}^{3} $. Consider the following mixed quantum state.
		\begin{eqnarray}
		\rho =\frac{x}{9}I+ ( 1-x ) | \varphi \rangle \langle\varphi  |,
		\end{eqnarray}
		where $ | \varphi \rangle =(a,0,0,0,1/\sqrt{3},0,0,0,1/\sqrt{3})^{T} /\sqrt{a^2+2/3} $. For $x$=0.01 and $x$=0.1, we can  also get these mixed quantum's lower bound utilizing the Theorem 4, as shown Fig.4.
	\begin{figure}[H]
	\centering
	\includegraphics[height=6cm,width=8cm]{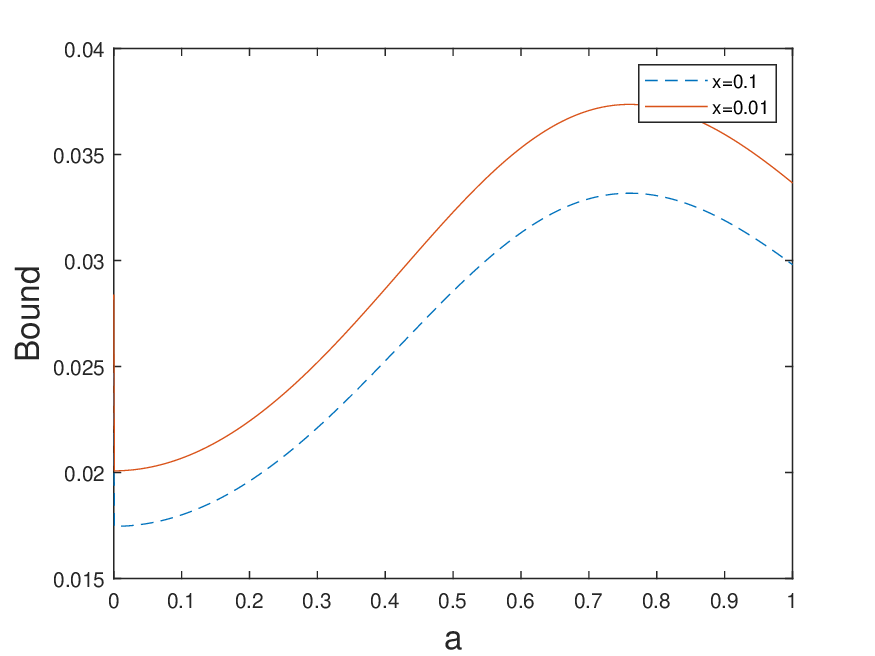}	
	\caption{Lower bounds of $C_{w}(\rho)$, where $ \rho \in \mathcal{H}^{3} \otimes \mathcal{H}^{3} $ and $\rho$ is mixed.}
	\end{figure}

		When we limit it to $tr(L)=1$, we can see that the lower bounds form Theorem 4 are relatively small when we discuss the higher dimensional quantum state. But we can deal with it by limiting the $L \in M_{a}$ and $a$ is higher than 1. If we cut off the unitary part of the quantum state, we can obtain the pure state belonging to $\mathcal{H}^{3} \otimes \mathcal{H}^{3} $, we can get the corresponding lower bound, as shown Fig.5.
	\begin{figure}[H]
	\centering
	\includegraphics[height=6cm,width=8cm]{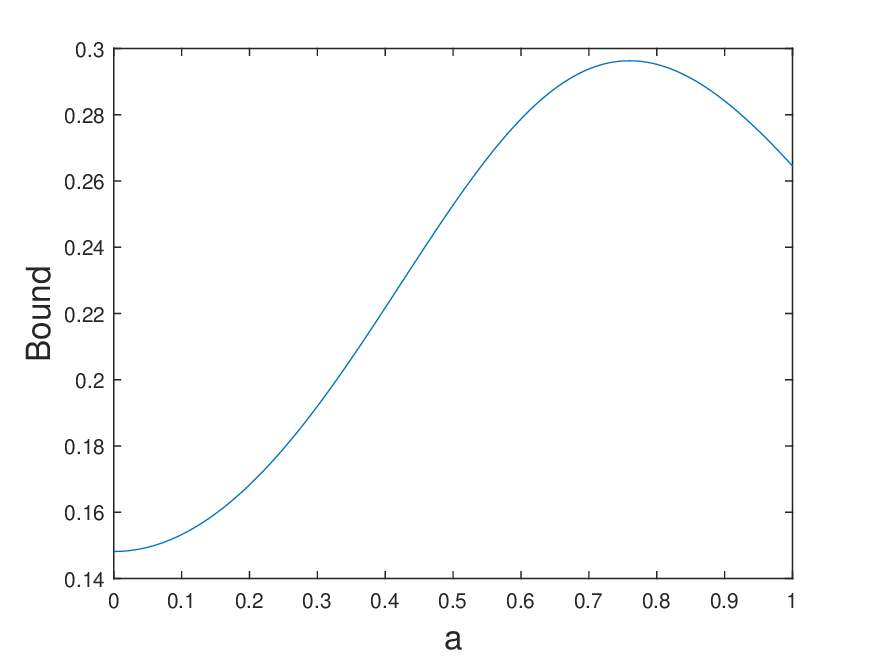}			
	\caption{Lower bounds of $C_{w}(\rho)$, where $ \rho \in \mathcal{H}^{3} \otimes \mathcal{H}^{3} $ and $\rho$ is pure.}
	\end{figure}

	\section{Multipartite Entanglement}

		So far we focused our attention on Hilbert spaces of two finite-dimensional Hilbert spaces. Now we consider the Hilbert space $\mathcal{H}=\mathcal{H}_1 \otimes \cdots \otimes \mathcal{H}_n$. The generalization of the definition of $C_{w}(\rho)$ is
		\begin{eqnarray*}
		C_{w}(\rho) =\max\limits_{L}\{ -tr ( W_{E}\rho) \},
		\end{eqnarray*}
		and it satisfies:
		\begin{eqnarray*}
		&&L \in M_{a},\\
		&&W_{E}=\alpha I-L,\\
		&&\alpha = \max\limits_{ | \varphi \rangle\in sep}\langle  \varphi |L | \varphi \rangle.
		\end{eqnarray*}
	
		It is obvious that there must be a Hermitian matrix $L$ which satisfies $tr  (W_{E}\rho)< 0$, $tr(W_{E}\sigma ) \geq 0$, where $\rho$ is an entangled state, $\sigma$ is a separable state.

	\section{Conclusion}

		In this paper, we define an entanglement measure based on entanglement witness and extend its definition to the multipartite entanglement of quantum states in arbitrary dimensional systems. This measure satisfies some necessary properties of an entanglement measure and we estimate its lower bound on bipartite system. The rationality of the entanglement measure is verified by the estimation of its lower bound by discussing pure and mixed states separately and the corresponding numerical simulation.

	\section*{\bf Acknowledgments} This work is supported by the Shandong Provincial Natural Science Foundation for Quantum Science No.ZR2021LLZ002, and the Fundamental Research Funds for the Central Universities No.22CX03005A.

	\section*{\bf Data availability statement} All data generated or analysed during this study are included in this published article.

\end{document}